\documentclass[letterpaper,10pt,conference]{ieeeconf}

\IEEEoverridecommandlockouts 
\usepackage{amsmath,amssymb,amsfonts}
\usepackage{amsthm}
\usepackage{graphicx}
\usepackage{textcomp}
\usepackage[dvipsnames]{xcolor}
\usepackage{booktabs}
\usepackage{cite}
\usepackage[hidelinks]{hyperref}
\hypersetup{pdftitle={SparseMax-dSTL: Sound Robustness with Horizon-Independent Gradient Allocation for Temporal Logic Trajectory Optimization}}
\usepackage[capitalise,noabbrev]{cleveref}
\usepackage{url}
\usepackage[font=footnotesize,labelfont=bf]{caption}
\usepackage{macros}

\newif\ifarxiv
\arxivfalse
\ifdefined\buildarxiv\arxivtrue\fi

\graphicspath{{figures/}}

\title{\LARGE\bfseries SparseMax-\texorpdfstring{$\partial$}{d}STL: Sound Robustness with Horizon-Independent Gradient Allocation for Temporal Logic Trajectory Optimization}

\author{Sai Sandeep Damera, Ryan Matheu, John S. Baras and Calin Belta%
\thanks{The authors are with the University of Maryland, College Park, USA. Emails: \{\texttt{sdamera}, \texttt{rmatheu}, \texttt{baras}, \texttt{calin}\}\texttt{@umd.edu}. Code available at \url{https://github.com/sdamera95/SparseMax-dSTL}.}%
}

\begin{document}
\maketitle
\thispagestyle{empty}
\pagestyle{empty}

\begin{abstract}
A gradient-based solver that enforces a Signal Temporal Logic (STL) specification uses a smooth approximation of its quantitative semantics, the robustness. In safety-critical trajectory optimization, this approximation must have two properties. First, it must be sound, that is, nonnegative only on trajectories that satisfy the specification. Second, at each temporal operator, the weight of its gradient on the violating samples must stay bounded away from zero, however long the horizon. The approximations in use, including the log-sum-exp, its sound variant, and the generalized-mean robustness, do not have both. We introduce lower approximations of the minimum and the maximum whose gradients are sparsemax projections, and we prove that the robustness built from them has both. It lies below the exact robustness and is therefore sound, and its error has a bound that is independent of the number of samples and known before optimization. The gradient of each approximation is exactly zero on every sample beyond a data-dependent threshold, so the weight on violating samples a gap beyond the others has a positive lower bound at every horizon. On a planar unicycle and on a seven-joint manipulator beside a person in simulation, the log-sum-exp approximation is positive on a violating trajectory. As the horizon of the manipulator's task grows from 10 to 80 s, the gradient weight of the sound variant on the violating samples falls from 0.22 to 0.04, while that of sparsemax stays at 1. In a constrained trajectory optimization of the unicycle, the constraint on the log-sum-exp robustness returns a violating trajectory, whereas the constraint on the sparsemax robustness returns a trajectory that satisfies the specification with the required margin.
\end{abstract}

\section{Introduction}
\label{smd:sec:introduction}
A gradient-based solver that enforces a signal temporal logic (STL) specification as a constraint does not evaluate the specification itself. STL assigns trajectories a scalar robustness, a nesting of minima and maxima over predicates and time samples whose sign determines satisfaction~\cite{maler2004monitoring,donze2010robust}. The nesting is nonsmooth, so the solver differentiates a smooth surrogate of every extremum~\cite{leung2023backpropagation,kapoor2025stlcg++} and accepts a trajectory as feasible when the resulting smoothed robustness is nonnegative. The surrogate therefore determines which trajectories the solver accepts as feasible and how its gradient is allocated among the time samples.

Consider a robot manipulator that hands a part to a person. Its STL specification requires that it stay out of the zone into which the person reaches until it has picked the part, keep a minimum separation from the person, and move slowly near the person. The first requirement is an until operator, whose left operand (stay out of the zone) must hold until its right operand (the pick) holds at some sample of a time window. Each sample of the window is a witness, and the robustness of the operator is a maximum over its witnesses. A trajectory that crosses the zone for a third of a second before the pick violates the specification at fewer than twenty of the hundreds or thousands of samples in the horizon, while the manipulator waits at the zone boundary for the rest. \Cref{smd:fig:geometry}(a) sketches such a trajectory with nine samples. A smoothed robustness that constrains such a trajectory must have two properties. First, it must be sound, that is, negative on every violating trajectory. Second, its gradient allocation, the weight that its gradient places on each sample, must not vanish on the violating samples as the waiting time grows. A policy gradient or a gradient-based safety filter that differentiates the robustness needs the same two properties.

\begin{figure}[t]
\centering
\includegraphics{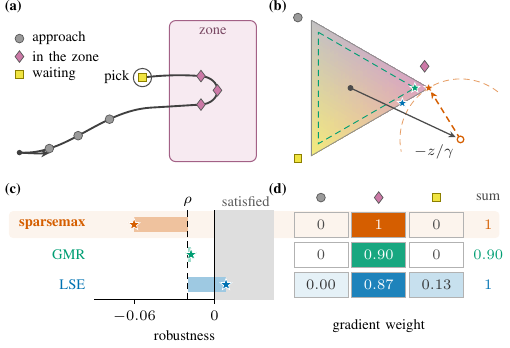}
\caption{(a) \emph{Stay out of the zone until the pick} on nine samples, three per class (approach, in the zone, waiting), with margins 0.25, $-0.02$, 0.04 from the zone and $-1$, 0.8, 0.8 at the pick, $\varepsilon=0.07$ per node. (b) The prefix minimum on the simplex of the class weights (illustrative, not to scale): the values $z$ push the centre (equal weights) by $-z/\gamma$, and the nearest point of the simplex gives the sparsemax weights; the LSE weights lie inside, the GMR weights sum to $0.90$ (dashed). (c) Values against the exact $\rho$ and (d) weight per class at the prefix minimum of (b); GMR of order $(-10,10)$.}
\label{smd:fig:geometry}
\end{figure}

The smoothing techniques in use today provide at most the first. The log-sum-exp (LSE) robustness~\cite{pant2017smooth} replaces the maximum of $m$ sample values $z=(z_1,\ldots,z_m)$ by $\beta^{-1}\log\sum_ie^{\beta z_i}$ with a parameter $\beta>0$. It exceeds the maximum by up to $\log m/\beta$~\cite{welikala2023smooth}, attained when all values are equal, and it can be nonnegative on a violating trajectory~\cite{gilpin2020smooth}. We call this smoothing the plain LSE. We show that an until operator produces such ties when the deepest violation of its left operand precedes the window, so the plain LSE robustness is positive on a shallow violation. Sound measures remove that defect. The softmax-mean measure under-approximates both extrema~\cite{gilpin2020smooth}, and the arithmetic-geometric mean and the generalized-mean robustness (GMR) preserve the sign of the exact robustness~\cite{mehdipour2019arithmetic,mehdipour2025generalized}. We ask how much weight a temporal operator keeps on its violating samples under each sound measure as the horizon adds samples. Subtracting from the LSE its largest excess, $\log m/\beta$, gives the sound LSE, which never exceeds the maximum and whose gradient, the softmax of $\beta z$, gives every sample a positive weight. At a worst-case error $\varepsilon$ per operator, its weight on the violating samples tends to zero as the horizon adds samples whose values lie within $\varepsilon$ of theirs. By eq.~(12) of~\cite{mehdipour2025generalized}, GMR averages the violating samples over all $d$ samples of the operator, so its weight on $k$ of them is of order $(k/d)^{1/q}$ for the exponent $q$ of its power mean.

We replace every extremum by a lower approximation derived from the maximum regularized by a quadratic penalty of weight $\gamma>0$, whose gradient is the sparsemax projection onto the simplex~\cite{martins2016Sparsemax,mensch2018differentiable}. \Cref{smd:fig:geometry} shows the construction on these nine samples. The value lies below the exact extremum by at most $\gamma/2$ for any number of samples, and every sample farther than a data-dependent threshold from the extremum receives a derivative of exactly zero. To our knowledge, the resulting robustness is the first smooth one that is a lower bound of the exact robustness and whose weight on the violating samples, for every gap by which they lie beyond the others, has a lower bound that does not depend on the horizon. Our contributions are the following.

\noindent\textbf{1)~Sound robustness from sparsemax lower extrema.} We construct lower approximations of both extrema from the regularized maximum of~\cite{mensch2018differentiable} and prove that the robustness built from them, until included, is a lower bound of the exact one with an error known before optimization (\cref{smd:prop:bands,smd:thm:composition}, composed as in~\cite{welikala2023smooth}). We also bound the excess of the plain LSE at an until operator from below and obtain the depth of violation up to which its robustness is positive (\cref{smd:prop:flip}).

\noindent\textbf{2)~Horizon-independent gradient allocation.} We prove that the sparsemax weight on $k$ violating samples that lie a gap $\Delta$ beyond the $N$ others is at least $\min\{1,k\Delta/\gamma\}$ for every $N$. At equal worst-case error $\varepsilon$, the weight of the sound LSE tends to zero as $N$ grows if $\Delta<\varepsilon$ (\cref{smd:thm:allocation,smd:cor:gap}).

\noindent\textbf{3)~Numerical examples.} We show on violating trajectories of a planar unicycle and a manipulator that the plain LSE robustness is positive and that the measured weights follow the closed forms. The optimizations demonstrate the two properties with a first-order solver, where the gradient's allocation sets the update; the immediate update advantage is claimed for first-order methods only (\cref{smd:sec:results}).

\ifarxiv Implementation details and further measurements are in the appendices.\fi

\section{Related Work}
\label{smd:sec:related}
STL robustness reduces Boolean and temporal requirements to nested extrema over predicate values~\cite{donze2010robust}. Differentiable STL computation graphs expose these quantities to gradient-based methods~\cite{leung2023backpropagation}, and temporal masking parallelizes their evaluation~\cite{kapoor2025stlcg++}. We keep such a graph and study the functions placed at its extremum nodes. An LSE maximum in such a graph is an upper bound of the maximum, so the resulting robustness is not a lower bound, and \cref{smd:prop:flip} shows where the excess is attained.

The softmax-mean measure pairs an LSE minimum with a softmax-weighted mean as the maximum, both under-approximations, so a positive value implies satisfaction for every parameter value~\cite{gilpin2020smooth}. Its minimum is the LSE minimum, so \cref{smd:thm:allocation} applies to it at minimum nodes, where the prefix of an until is evaluated. Average-based measures replace the extrema by means, first the arithmetic-geometric mean~\cite{mehdipour2019arithmetic} and then power means and generalized functional means~\cite{mehdipour2025generalized}. Their purpose is to remove the locality and masking of the extrema, by which the robustness depends on one operand at one time, so that every subformula and every time sample contributes to the value. They obtain soundness by branching on the sign of the operands, so the sign of the measure equals the sign of the exact robustness. Their conjunction lies at or above the minimum, so these measures preserve the sign and not a margin, and their distance from the exact robustness depends on the operands. A generalized-mean measure that is also continuously differentiable follows the same design~\cite{uzun2024optimization}. A smooth, idempotent, and commutative conjunction has the derivative $1/M$ with respect to each of $M$ equal nonzero operands~\cite{varnai2020robustness}. A comparison of four smooth measures composes per-node error bounds through the specification by the recursion we also use and allows different operator parameters at different levels of the specification~\cite{welikala2023smooth}.

Sparsemax is the Euclidean projection onto the probability simplex~\cite{martins2016Sparsemax}, and regularized maxima with their use in differentiable dynamic programming supply the operator we build on~\cite{mensch2018differentiable}. The binary polynomial smooth minimum in~\cite[Section~5.2.4]{mao2021successive} coincides with $Q^-_\gamma$ of \cref{smd:eq:lower} for two entries. Our contribution is the lower-bound construction for both extrema at any number of entries, its composition through STL, and the allocation result at equal error.

Our operators stand between the extrema and the means. Every sample within a threshold of the extremum contributes, so tied or nearly tied violating samples share the derivative as they do under a mean, and every sample beyond the threshold receives none. It is observed in~\cite{mehdipour2025generalized} that a measure which considers only the most critical time suits control in safety-critical applications and that the exponents of the power means set how much the other times contribute. The parameter $\gamma$ has that role here, with two guarantees: a lower bound of the exact robustness with an error known before optimization, where a mean preserves the sign alone, and a weight on the violating samples bounded below independently of the number of other samples (\cref{smd:cor:gap}). To our knowledge no earlier smooth measure provides both, and none of the works above gives the weight on the violating samples beyond a tie.

\section{Problem Formulation}
\label{smd:sec:problem}
\subsection{Specifications and exact robustness}
Let $x_{t+1}=f_h(x_t,u_t)$ be a discrete-time robot model with sampling period $h>0$ and fixed initial state $x_0$. The controls $U=(u_0,\ldots,u_{T-1})$ lie in a box $\mathcal U$ of actuator bounds and generate the sampled trajectory $X(U)=(x_0,\ldots,x_T)$ over the horizon $H=Th$. An STL specification in negation normal form is built from the grammar
\begin{equation}
\phi::=\mu\mid\neg\mu\mid\phi\wedge\psi\mid\phi\vee\psi\mid
\mathbf G_I\phi\mid\mathbf F_I\phi\mid\phi\,\mathbf U_I\,\psi,
\label{smd:eq:grammar}
\end{equation}
where $\mu$ is a predicate with a continuously differentiable value $g_\mu(x_t)$ that is nonnegative when the predicate holds, so that $\rho^\mu_t=g_\mu(x_t)$ and $\rho^{\neg\mu}_t=-g_\mu(x_t)$, and $I=\{a,\ldots,b\}$ with integers $0\le a\le b$ is a window of sample offsets. The robustness of a specification at sample $t$ is defined recursively from the predicate values by
\begin{equation}
\begin{aligned}
\rho^{\phi\wedge\psi}_t&=\min\{\rho^\phi_t,\rho^\psi_t\},\qquad
\rho^{\phi\vee\psi}_t=\max\{\rho^\phi_t,\rho^\psi_t\},\\
\rho^{\mathbf G_I\phi}_t&=\min_{j\in I}\rho^\phi_{t+j},\qquad
\rho^{\mathbf F_I\phi}_t=\max_{j\in I}\rho^\phi_{t+j},\\
\rho^{\phi\,\mathbf U_I\,\psi}_t&=\max_{j\in I}\,\min\Bigl\{\rho^\psi_{t+j},\ \min_{0\le i\le j}\rho^\phi_{t+i}\Bigr\},
\end{aligned}
\label{smd:eq:semantics}
\end{equation}
and we say that the sampled trajectory satisfies $\phi$ when $\rho(X)=\rho^\phi_0\ge0$. We use the until semantics with a closed prefix~\cite{sadraddini2015robust,mehdipour2025generalized}, in which the witness at offset $j\in I$ requires $\psi$ at $t+j$ and $\phi$ at every sample from $t$ through $t+j$. Unrolling \cref{smd:eq:semantics} over the samples gives a finite acyclic evaluation graph whose leaves are predicate values and whose internal nodes are minima and maxima. A node is evaluated only where every sample its window reads exists, so the horizon of a specification is at most $H$. We write $\mathrm{ch}(v)$ for the children of node $v$ and $m_v$ for their number. \Cref{smd:fig:until} shows the two levels of an until node.

\begin{figure}[t]
\centering
\includegraphics{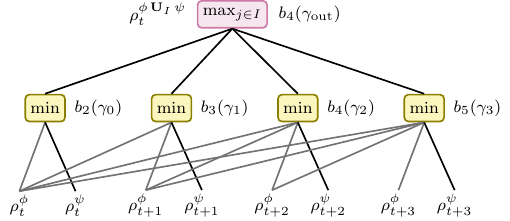}
\caption{The until node $\phi\,\mathbf U_I\,\psi$ at sample $t$ for a window of four offsets. The outer maximum ranges over the witnesses $j\in I$, and the child of witness $j$ is a minimum over $\rho^\psi_{t+j}$ and the closed prefix $\rho^\phi_t,\ldots,\rho^\phi_{t+j}$. Each node is labeled with its error band $b_m(\gamma)$ of \cref{smd:prop:bands}.}
\label{smd:fig:until}
\end{figure}

\subsection{Constrained trajectory optimization}
We seek controls of least effort $E(U)$ that satisfy a specification $\phi=\bigwedge_{j=1}^{J}\phi_j$,
\begin{equation}
\min_{U\in\mathcal U}\ E(U)
\quad\text{subject to}\quad
\rho^{\phi_j}_0(X(U))\ge0,\quad j=1,\ldots,J.
\label{smd:eq:exact}
\end{equation}
The conjunction at the top of the specification is imposed exactly as $J$ constraints. Since each $\rho^{\phi_j}_0$ is nonsmooth in $U$, a gradient-based solver replaces every extremum node of the graph by a smooth function of its children and solves
\begin{equation}
\min_{U\in\mathcal U}E(U)
\ \ \text{subject to}
\ \ \widetilde\rho_j(X(U))\ge c,\ \ j=1,\ldots,J,
\label{smd:eq:conjuncts}
\end{equation}
with a margin $c\ge0$. The smoothing thus sets the feasible set of \cref{smd:eq:conjuncts} and the search direction of the solver.

\subsection{Two requirements on the smoothing}
The first requirement concerns the feasible set.
\begin{definition}
\label{smd:def:sound}
A smoothed robustness $\widetilde\rho$ is \emph{sound} if $\widetilde\rho(X)\ge0$ implies $\rho(X)\ge0$ for every trajectory $X$. It is a \emph{lower bound with budget} $B\ge0$ if $\rho(X)-B\le\widetilde\rho(X)\le\rho(X)$ for every $X$.
\end{definition}
A lower bound is sound, and a margin on it is a margin on the exact robustness, since $\widetilde\rho\ge c$ implies $\rho\ge c$. With lower bounds $\widetilde\rho_j$ the feasible set of \cref{smd:eq:conjuncts} lies inside that of \cref{smd:eq:exact}, every feasible trajectory satisfies the specification, and the solver may return the first one it reaches. If $\widetilde\rho_j\ge c$ can hold where $\rho^{\phi_j}_0<0$, the feasible set contains violating trajectories, which a solver that minimizes the effort can return.

The second requirement concerns the search direction. By the chain rule, the gradient of a smoothed robustness with respect to the controls is a weighted sum $\sum_\ell\pi_\ell\nabla_Ug_\ell(X(U))$ over the predicate leaves $\ell$. The leaf weight $\pi_\ell$ is the sum over root-to-leaf paths of the products of the derivatives of the smooth extrema along the path. At one extremum node with child values $z\in\R^m$, exact extremum $A$ and smooth replacement $\widetilde A$, we call $\sum_{i\in C}\partial\widetilde A(z)/\partial z_i$ the weight on the children $C$. On a violating trajectory the children that attain the minimum of a temporal operator over predicates are violating samples, and the number of its other children grows with the horizon.

\begin{problem}
\label{smd:prob:main}
Find smooth replacements of the minimum and the maximum such that, for every specification of \cref{smd:eq:grammar},
\begin{enumerate}
\item[(R1)] the smoothed robustness is a lower bound whose budget depends only on the evaluation graph and the smoothing parameters, and
\item[(R2)] at a node whose $k$ extreme children are separated from its $N$ other children by a gap $\Delta>0$, the weight on the $k$ children stays bounded away from zero as $N$ grows.
\end{enumerate}
\end{problem}
Every smooth extremum tends to the exact one as its worst-case error $\sup_z|\widetilde A(z)-A(z)|$ tends to zero, and its gradient then changes more abruptly with the trajectory. We therefore pose (R2) at a fixed worst-case error $\varepsilon$ per node and compare two smoothings at equal $\varepsilon$.

\section{Sparsemax Lower Extrema}
\label{smd:sec:method}
\subsection{The operators}
For $z\in\R^m$ and $\gamma>0$, the quadratically regularized maximum over the simplex $\mathcal S_m=\{p\in\R^m:p\ge0,\ \1^\top p=1\}$ and its maximizer are
\begin{equation}
M_\gamma(z)=\max_{p\in\mathcal S_m}\Bigl(p^\top z-\tfrac\gamma2\|p\|_2^2\Bigr),\quad
p_\gamma(z)=\Pi_{\mathcal S_m}(z/\gamma).
\label{smd:eq:regularized}
\end{equation}
The objective is strictly concave, completing the square gives the Euclidean projection $\Pi_{\mathcal S_m}$, which is the sparsemax map~\cite{martins2016Sparsemax}, and the envelope theorem gives $\nabla M_\gamma(z)=p_\gamma(z)$. The optimality conditions of the projection are
\begin{equation}
p_{\gamma,i}(z)=\frac{(z_i-\theta)_+}{\gamma},
\qquad \sum_{i=1}^{m}(z_i-\theta)_+=\gamma,
\label{smd:eq:threshold}
\end{equation}
so every entry at or below the threshold $\theta$ receives a derivative of exactly zero (\cref{smd:fig:geometry}(b)). The maximum of the penalty over the simplex is $\gamma/2$ and its minimum is $\gamma/(2m)$, so $M_\gamma(z)$ lies between $\max_iz_i-\gamma/2$ and $\max_iz_i-\gamma/(2m)$~\cite[Lemma~1]{mensch2018differentiable}. Shifting by these two constants gives lower approximations of both extrema,
\begin{equation}
Q^+_\gamma(z)=M_\gamma(z)+\frac\gamma{2m},\qquad
Q^-_\gamma(z)=-M_\gamma(-z)-\frac\gamma2.
\label{smd:eq:lower}
\end{equation}

\begin{prop}[Error bands]
\label{smd:prop:bands}
Let $b_m(\gamma)=\frac\gamma2(1-1/m)$. For every $z\in\R^m$,
\begin{equation}
\begin{aligned}
0&\le\max_i z_i-Q^+_\gamma(z)\le b_m(\gamma),\\
0&\le\min_i z_i-Q^-_\gamma(z)\le b_m(\gamma).
\end{aligned}
\label{smd:eq:bands}
\end{equation}
Both operators are continuously differentiable and nondecreasing in every entry, with gradients $p_\gamma(z)$ and $p_\gamma(-z)$ in the simplex.
\end{prop}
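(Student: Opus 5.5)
The plan is to derive everything from the sandwich $\max_i z_i-\gamma/2\le M_\gamma(z)\le\max_i z_i-\gamma/(2m)$ and then transfer it to the minimum by the substitution $z\mapsto -z$.

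First we prove the sandwich directly, without relying only on the citation. On the simplex, $p^\top z\le\max_i z_i$, and by Cauchy--Schwarz $\|p\|_2^2\ge(\1^\top p)^2/m=1/m$. Hence every feasible objective value is at most $\max_i z_i-\gamma/(2m)$, which gives the upper bound. For the lower bound, evaluate the objective at the vertex $p=e_{i^\star}$ with $i^\star\in\arg\max_i z_i$. This gives $z_{i^\star}-\gamma/2$, so $M_\gamma(z)\ge\max_i z_i-\gamma/2$.

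Adding $\gamma/(2m)$ to the sandwich gives
\[
\max_i z_i-\tfrac\gamma2\bigl(1-\tfrac1m\bigr)\le Q^+_\gamma(z)\le\max_i z_i,
\]
which is the first line of \cref{smd:eq:bands}. For the second line, apply the sandwich to $-z$ and use $\max_i(-z_i)=-\min_i z_i$. This yields $-\min_i z_i-\gamma/2\le M_\gamma(-z)\le-\min_i z_i-\gamma/(2m)$. Negating and subtracting $\gamma/2$ gives
\[
\min_i z_i-\tfrac\gamma2\bigl(1-\tfrac1m\bigr)\le Q^-_\gamma(z)\le\min_i z_i,
\]
as claimed.

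For the regularity claims, the objective $p^\top z-\frac\gamma2\|p\|^2$ is strictly concave in $p$, so the maximizer is unique and equals $p_\gamma(z)=\Pi_{\mathcal S_m}(z/\gamma)$. Danskin's theorem (the envelope theorem) then gives $\nabla M_\gamma(z)=p_\gamma(z)$. The projection onto a closed convex set is $1$-Lipschitz, so this gradient is continuous and $M_\gamma$ is $C^1$. Equivalently, $M_\gamma$ is the convex conjugate of the $\gamma$-strongly convex function $\frac\gamma2\|p\|^2+\iota_{\mathcal S_m}$, and therefore has a $(1/\gamma)$-Lipschitz gradient. The chain rule then gives $\nabla Q^+_\gamma(z)=p_\gamma(z)$ and $\nabla Q^-_\gamma(z)=p_\gamma(-z)$, both lying in $\mathcal S_m$. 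In particular every partial derivative is nonnegative, and a $C^1$ function with nonnegative partials is nondecreasing in each entry.

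The only point needing care is justifying the differentiation of a parametric maximum. Uniqueness of the maximizer together with compactness of $\mathcal S_m$ is what Danskin's theorem requires, so the conjugate-duality argument is not strictly needed, though it also gives the Lipschitz constant for free.
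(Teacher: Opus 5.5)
Your proof is correct and follows essentially the same route as the paper: the vertex bound together with $p^\top z\le\max_i z_i$ and $\|p\|_2^2\ge 1/m$ gives the sandwich on $M_\gamma$, which you shift and transfer to $Q^-_\gamma$ via $z\mapsto -z$, and the gradients $p_\gamma(z)$ and $p_\gamma(-z)$ then follow from the envelope theorem and the chain rule. The only difference is that you also justify continuous differentiability explicitly, through Danskin's theorem and the continuity of the projection; the paper takes this from its discussion before the proposition.
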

\begin{proof}
A vertex that attains $\max_iz_i$ gives $M_\gamma(z)\ge\max_iz_i-\gamma/2$. Every $p\in\mathcal S_m$ satisfies $p^\top z\le\max_iz_i$ and $\|p\|_2^2\ge1/m$, so $M_\gamma(z)\le\max_iz_i-\gamma/(2m)$. Applying both bounds to $z$ and to $-z$ gives \cref{smd:eq:bands}. The gradients follow from the chain rule, and a gradient in the simplex has nonnegative entries.
\end{proof}

The band does not depend on $z$ and is at most $\gamma/2$ for every $m$, at minimum and maximum nodes alike. \Cref{smd:fig:operator}(a) shows $Q^-_\gamma$ inside its band.

\begin{figure}[t]
\centering
\includegraphics{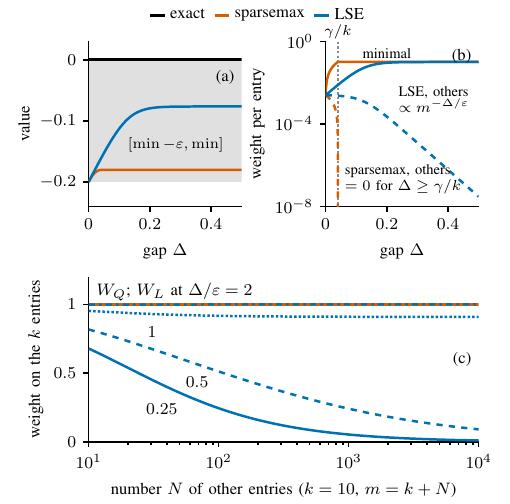}
\caption{The minima of sparsemax and of the sound LSE on $k=10$ entries at the minimum and $N$ entries a gap $\Delta$ above it, at equal worst-case error $\varepsilon=0.2$; $N=400$ in (a) and (b). (a) Both values lie in the band $[\min z-\varepsilon,\min z]$. (b) Weight on one minimal entry (solid) and on one other entry (dashed), exactly zero for sparsemax once $\Delta\ge\gamma/k$. (c) Weight on the $k$ entries against $N$ (\cref{smd:thm:allocation}): $W_Q=1$ at every ratio $\Delta/\varepsilon$ shown, since $k\Delta\ge\gamma$ there, and each $W_L$ curve is labeled by its ratio.}
\label{smd:fig:operator}
\end{figure}

\subsection{A lower bound on the evaluation graph}
Replace the extremum at every internal node $v$ by the operator of \cref{smd:eq:lower} with the number $m_v$ of its children and its parameter $\gamma_v$, and let $\widetilde\rho_v$ and $\rho_v$ be the smoothed and exact values of the node. A predicate leaf may itself be evaluated by a smooth function that lies at most $e_v\ge0$ below the exact predicate and never above it. Define the budget
\begin{equation}
B_v=\begin{cases}
e_v,&\text{leaf},\\
b_{m_v}(\gamma_v)+\max_{u\in\mathrm{ch}(v)}B_u,&\text{otherwise},
\end{cases}
\label{smd:eq:budget}
\end{equation}
which is the recursion that composes per-node bands of LSE and softmax operators in~\cite{welikala2023smooth}. Every node is evaluated as one operator over all of its children, as \cref{smd:fig:until} draws the two levels of an until; the operators are not associative, so splitting a node into nested ones changes the function and adds a band to its path.

\begin{thm}[Lower bound]
\label{smd:thm:composition}
On a finite acyclic evaluation graph of minima and maxima whose leaves satisfy $0\le\rho_v-\widetilde\rho_v\le e_v$, every node satisfies
\begin{equation}
0\le\rho_v-\widetilde\rho_v\le B_v.
\label{smd:eq:composition}
\end{equation}
At the root, $B_v$ is the largest sum, along a root-to-leaf path, of the bands of its nodes and the error of its leaf.
\end{thm}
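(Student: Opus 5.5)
We argue by structural induction on the finite acyclic evaluation graph, processing nodes in a topological order from the leaves upward. The base case is the hypothesis on the leaves, $0\le\rho_v-\widetilde\rho_v\le e_v=B_v$. For an internal node $v$ with children $u\in\mathrm{ch}(v)$, write $z=(\rho_u)_u$ for the exact child values and $\widetilde z=(\widetilde\rho_u)_u$ for the smoothed ones. The induction hypothesis gives $z-\beta\le\widetilde z\le z$ componentwise, where $\beta=\max_{u\in\mathrm{ch}(v)}B_u$. We treat a maximum node, with $A=\max$ and $Q=Q^+_{\gamma_v}$, and a minimum node, with $A=\min$ and $Q=Q^-_{\gamma_v}$, in parallel, since \cref{smd:prop:bands} gives both the same properties.

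We split the error at $v$ as
\[
\rho_v-\widetilde\rho_v=\bigl(A(z)-A(\widetilde z)\bigr)+\bigl(A(\widetilde z)-Q(\widetilde z)\bigr).
\]
The second term lies in $[0,b_{m_v}(\gamma_v)]$ by \cref{smd:eq:bands}, applied at the point $\widetilde z$. For the first term we use that both extrema are nondecreasing in every entry and commute with adding a constant vector. From $z-\beta\1\le\widetilde z\le z$ we obtain $A(z)-\beta\le A(\widetilde z)\le A(z)$, so the first term lies in $[0,\beta]$. Adding the two ranges gives $0\le\rho_v-\widetilde\rho_v\le b_{m_v}(\gamma_v)+\beta=B_v$, which is \cref{smd:eq:composition}.

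The path interpretation at the root comes from unrolling \cref{smd:eq:budget}. By induction, $B_v$ equals the maximum, over all directed paths from $v$ down to a leaf $\ell$, of the sum of the bands $b_{m_w}(\gamma_w)$ of the internal nodes $w$ on the path plus $e_\ell$. Acyclicity and finiteness guarantee that the recursion terminates and that this maximum is over a finite set. Because the graph is a DAG, shared subformulas simply contribute their own $B_u$ to each of their parents.

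No step here is a real obstacle. The one point to check is that monotonicity is applied to the exact extremum $A$ and the band of \cref{smd:prop:bands} at the perturbed input $\widetilde z$. This order avoids needing any Lipschitz property of $Q$ itself.
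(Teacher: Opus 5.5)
Your proof is correct and is essentially the paper's own. It uses the same induction over the graph and the same split of $\rho_v-\widetilde\rho_v$ into $A(z)-A(\widetilde z)\in[0,\max_uB_u]$ and $A(\widetilde z)-Q(\widetilde z)\in[0,b_{m_v}(\gamma_v)]$ from \cref{smd:prop:bands}, and your monotonicity-plus-shift-commutation step is the infinity-norm nonexpansiveness of $A$ that the paper invokes.
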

\begin{proof}
The claim holds at the leaves by assumption. Suppose it holds at every child of a node, and let $a$ and $\widetilde a$ collect the exact and smoothed child values, $A$ the node's exact extremum, and $Q$ its operator of \cref{smd:eq:lower}. Monotonicity of $A$ and \cref{smd:prop:bands} give $A(a)\ge A(\widetilde a)\ge Q(\widetilde a)$. Since $A$ is nonexpansive in the infinity norm, $A(a)-A(\widetilde a)\le\max_uB_u$, and $A(\widetilde a)-Q(\widetilde a)\le b_{m_v}(\gamma_v)$ by \cref{smd:prop:bands}. Adding the two gives \cref{smd:eq:composition}, and unrolling \cref{smd:eq:budget} gives the path sum.
\end{proof}

The smoothed robustness is therefore a lower bound in the sense of \cref{smd:def:sound}, which is (R1). The budget is fixed by the graph and the parameters before any trajectory is evaluated. An until node is one maximum over $|I|$ witnesses whose child for witness $j$ is a minimum over $j+2$ values, so it adds two bands to a path for any window length,
\begin{equation}
B_{\mathbf U}=b_{|I|}(\gamma_{\mathrm{out}})+\max_{j\in I}\bigl[b_{j+2}(\gamma_j)+\max\{B_\phi,B_\psi\}\bigr].
\label{smd:eq:until-budget}
\end{equation}
The statements concern the sampled semantics in exact arithmetic. They do not cover the behavior between samples.

\ifarxiv
\begin{remark}[Cost of an until node]
\label{smd:rem:cost}
Because the inner minima of an until node share nested prefixes of $\rho^\phi$, one sort and a search tree over the entries~\cite{duchi2008efficient} give every inner value and the gradient of the node in $O(b\log b)$ operations for a window $I=\{a,\ldots,b\}$ (\cref{smd:app:cost}), against $O(b)$ for the exact minimum.
\end{remark}
\fi

\subsection{The log-sum-exp at an until node}
\label{smd:subsec:flip}
The LSE maximum $\overline L_\beta(z)=\beta^{-1}\log\sum_{i=1}^{m}e^{\beta z_i}$ lies above the maximum, and when $n$ entries attain the maximum,
\begin{equation}
\frac{\log n}{\beta}\le\overline L_\beta(z)-\max_iz_i\le\frac{\log m}{\beta}.
\label{smd:eq:plain}
\end{equation}
The LSE minimum $-\overline L_\beta(-z)$ lies below the minimum. We call a graph with these two operators the plain LSE. Its excess at a maximum grows with the number of tied entries, and an until operator produces such ties.

\begin{prop}[Sign flip at until]
\label{smd:prop:flip}
Let the exact value of an until node with window $I$ be $-\delta<0$, attained by $n$ of its witnesses. Replace the maximum over witnesses by $\overline L_\beta$ and every inner minimum by a lower approximation whose error on the attaining witnesses is at most $e$. Then the smoothed value $\widehat\rho$ of the node satisfies
\begin{equation}
\widehat\rho\ge-\delta-e+\frac{\log n}{\beta}.
\label{smd:eq:flip}
\end{equation}
Suppose that $\rho^\phi$ attains its minimum over the samples $t,\ldots,t+\max I$ before the window opens ($a\ge1$), that this minimum is $-\delta<0$, and that $\rho^\psi_{t+j}\ge-\delta$ for at least one $j\in I$. Then the exact value of the node is $-\delta$, and the attaining witnesses are exactly those $j$.
\end{prop}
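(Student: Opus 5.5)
The proof has two independent parts. The first is an application of \cref{smd:eq:plain} combined with monotonicity. The second is a direct computation of the exact semantics of \cref{smd:eq:semantics}.

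\emph{Inequality \cref{smd:eq:flip}.} Let $w_j$ be the exact value of witness $j$ and $\widehat w_j$ its smoothed value. By assumption $\widehat w_j\ge w_j-e=-\delta-e$ on the $n$ attaining witnesses. The function $\overline L_\beta$ is nondecreasing in every entry. Its sum $\sum_i e^{\beta z_i}$ can only drop if we discard entries, and the entries of the non-attaining witnesses contribute positive terms. Hence
\[
\widehat\rho=\overline L_\beta(\widehat w)\ge\beta^{-1}\log\sum_{j\text{ attaining}}e^{\beta\widehat w_j}\ge\beta^{-1}\log\bigl(n\,e^{\beta(-\delta-e)}\bigr)=-\delta-e+\frac{\log n}{\beta}.
\]
Put differently, the lower bound of \cref{smd:eq:plain} is applied to the vector restricted to the attaining witnesses, all of which equal $-\delta$ up to the error $e$. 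I will use this direct log-sum estimate rather than \cref{smd:eq:plain} itself, because the smoothed inner values need not tie exactly.

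\emph{Exact value and attaining set.} Write $\mu=\min_{0\le i\le\max I}\rho^\phi_{t+i}=-\delta$, attained at some $i^\ast<a$. Every witness $j\in I$ satisfies $j\ge a>i^\ast$, so its closed prefix $\{t,\dots,t+j\}$ contains $t+i^\ast$ and lies inside the range defining $\mu$. The prefix minimum therefore equals $-\delta$ for every $j\in I$, and the witness value is $w_j=\min\{\rho^\psi_{t+j},-\delta\}\le-\delta$. Equality holds exactly when $\rho^\psi_{t+j}\ge-\delta$. Some such $j$ exists by hypothesis, so the maximum over witnesses is $-\delta$. The attaining witnesses are precisely those $j\in I$ with $\rho^\psi_{t+j}\ge-\delta$, which are the $j$ named in the statement.

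\emph{Main obstacle.} There is no deep step here. The care goes into the first part: it needs only a one-sided error on the attaining witnesses, and the remaining witnesses may be arbitrary. The estimate that drops those terms handles this cleanly, because it never uses a bound on them. The conclusion I would add afterwards is that $\widehat\rho>0$ whenever $\delta<\log n/\beta-e$. The ties required for this arise generically: every witness with $\rho^\psi\ge-\delta$ ties, which in the handover example is the whole waiting period.
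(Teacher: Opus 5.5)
Your proof is correct and follows the paper's argument. For the bound, you drop the positive exponential terms of the non-attaining witnesses and use the monotonicity of $\overline L_\beta$ with the lower bound $-\delta-e$ on the $n$ attaining ones; for the exact value, you observe that every witness's closed prefix contains the minimizing sample and lies inside $\{t,\ldots,t+\max I\}$, so its value is $\min\{\rho^\psi_{t+j},-\delta\}$, a containment the paper leaves implicit.
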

\begin{proof}
The smoothed inner values of the $n$ attaining witnesses are at least $-\delta-e$. Since $\overline L_\beta$ is nondecreasing in every entry and its other entries contribute positive terms, $\widehat\rho$ is at least $\overline L_\beta$ of $n$ entries equal to $-\delta-e$, which is \cref{smd:eq:flip}. For the second statement, the prefix of every witness contains the minimizing sample, so its inner minimum is $\min\{\rho^\psi_{t+j},-\delta\}$, which equals $-\delta$ when $\rho^\psi_{t+j}\ge-\delta$ and is smaller otherwise.
\end{proof}

With $\beta=\log|I|/\varepsilon$, the plain LSE value of the violated node is positive whenever $\delta+e<\varepsilon\log n/\log|I|$. A trajectory whose deepest violation of $\phi$ up to the end of the window occurs before the window opens, and which satisfies $\psi$ throughout the window, ties all $|I|$ witnesses. The plain LSE value is then positive for every violation $\delta<\varepsilon-e$, so the constraint of an until conjunct in \cref{smd:eq:conjuncts} holds with $c=0$ on a violating trajectory. By \cref{smd:thm:composition}, the operators of \cref{smd:eq:lower} give it a negative value.

\subsection{Gradient allocation at equal error}
\label{smd:subsec:allocation}
Soundness alone is available from the LSE as well. Subtracting its worst-case excess gives the lower maximum
\begin{equation}
L_\beta(z)=\overline L_\beta(z)-\frac{\log m}{\beta},
\label{smd:eq:lse}
\end{equation}
which by \cref{smd:eq:plain} lies at most $\log m/\beta$ below the maximum and never above it. It is the generalized mean with generating function $e^{\beta x}$~\cite[Section~IV-C]{mehdipour2025generalized}. The LSE minimum has the same band below the minimum~\cite{gilpin2020smooth}. A graph with $L_\beta$ at maximum nodes and the LSE minimum at minimum nodes is therefore a lower bound by the argument of \cref{smd:thm:composition}. This graph is the sound LSE. Its gradient is the softmax of $\beta z$, which is positive on every entry. To compare the two families at one node of $m>1$ entries, we give them the same worst-case error $\varepsilon$,
\begin{equation}
\beta=\frac{\log m}{\varepsilon},\qquad
\gamma=\frac{2\varepsilon}{1-1/m}.
\label{smd:eq:matching}
\end{equation}

\begin{thm}[Gradient allocation]
\label{smd:thm:allocation}
Let a maximum node have $k\ge1$ children equal to $0$ and $N\ge1$ children equal to $-\Delta$, with $\Delta>0$ and $m=k+N$. The weights of $L_\beta$ and of $Q^+_\gamma$ on the $k$ children are
\begin{equation}
W_L=\frac{k}{k+N e^{-\beta\Delta}},\quad
W_Q=\min\Bigl\{1,\frac km+\frac{kN\Delta}{m\gamma}\Bigr\},
\label{smd:eq:mass}
\end{equation}
and $W_Q=1$ if and only if $\Delta\ge\gamma/k$. Under \cref{smd:eq:matching}, with $k$, $\varepsilon$, and $\Delta$ fixed and $N\to\infty$,
\begin{equation}
W_Q\to\min\bigl\{1,\tfrac{k\Delta}{2\varepsilon}\bigr\},
\ \ W_L\to\begin{cases}0,&\Delta<\varepsilon,\\k/(k+1),&\Delta=\varepsilon,\\1,&\Delta>\varepsilon.\end{cases}
\label{smd:eq:limits}
\end{equation}
The same statements hold at a minimum node after negating the children.
\end{thm}
\begin{proof}
$W_L$ is the sum of the softmax weights of the $k$ children. For $Q^+_\gamma$, if the $N$ children are above the threshold, symmetry and \cref{smd:eq:threshold} give $\theta=-(N\Delta+\gamma)/m$, so each of the $k$ children has weight $(N\Delta+\gamma)/(m\gamma)$, and $\theta<-\Delta$ holds precisely when $k\Delta<\gamma$. Otherwise the $N$ children receive zero and the $k$ children $1/k$ each, which proves \cref{smd:eq:mass}. Under \cref{smd:eq:matching}, $\gamma\to2\varepsilon$ and $N/m\to1$, which gives the limit of $W_Q$, and $Ne^{-\beta\Delta}=(N/m)\,m^{1-\Delta/\varepsilon}$ diverges, tends to one, or vanishes as $\Delta$ is below, at, or above $\varepsilon$.
\end{proof}

\begin{corollary}[Weight under a gap]
\label{smd:cor:gap}
Let the $k$ largest children of a maximum node exceed each of its $N$ other children by at least $\Delta>0$. Then, for every $N$ and all values of the children, $W_Q\ge\min\{1,k\Delta/\gamma\}$ and $W_L\ge k/(k+Ne^{-\beta\Delta})$. The configuration of \cref{smd:thm:allocation} attains the bound on $W_L$ for every $N$.
\end{corollary}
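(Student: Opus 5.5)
The plan is to reduce both bounds to the closed-form descriptions of the two gradients: the threshold form \cref{smd:eq:threshold} for $Q^+_\gamma$, and the softmax of $\beta z$ for $L_\beta$. No limit argument is needed. Let $T$ be the set of the $k$ largest children and $R$ the set of the $N$ others. Write $a=\min_{i\in T}z_i$ and $s=\max_{i\in R}z_i$, so the hypothesis reads $a\ge s+\Delta$. Both weights are shift invariant: $p_\gamma(z+c\1)=p_\gamma(z)$ and softmax is unchanged by adding a constant. So nothing depends on the absolute level of the children.

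For $W_Q$, I would split on the position of the threshold $\theta$ of \cref{smd:eq:threshold} relative to $s$.
\begin{itemize}
\item If $\theta\ge s$, then every child in $R$ receives $(z_i-\theta)_+/\gamma=0$. Since $p_\gamma(z)\in\mathcal S_m$ by \cref{smd:prop:bands}, the whole mass lies on $T$ and $W_Q=1$.
\item If $\theta<s$, then every $i\in T$ satisfies $z_i-\theta\ge a-\theta>a-s\ge\Delta$. Hence $W_Q=\sum_{i\in T}(z_i-\theta)_+/\gamma\ge k\Delta/\gamma$.
\end{itemize}
In either case $W_Q\ge\min\{1,k\Delta/\gamma\}$, and $N$ never enters. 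This case split is the only step needing care. The point to check is that the nonlinear dependence of $\theta$ on all $m$ entries is never used: only the sign of $\theta-s$ matters.

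For $W_L$, write the weight on $T$ as $x/(x+y)$ with
\[
x=\sum_{i\in T}e^{\beta z_i},\qquad y=\sum_{i\in R}e^{\beta z_i}.
\]
This quantity is nondecreasing in $x$ and nonincreasing in $y$. The hypothesis gives $x\ge ke^{\beta a}$ and $y\le Ne^{\beta s}\le Ne^{\beta(a-\Delta)}$. Substituting these bounds and dividing by $e^{\beta a}$ yields $W_L\ge k/(k+Ne^{-\beta\Delta})$.

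Finally, in the configuration of \cref{smd:thm:allocation} all children in $T$ equal $0$ and all in $R$ equal $-\Delta$. Both inequalities on $x$ and $y$ then hold with equality, so the bound on $W_L$ is attained for every $N$, matching \cref{smd:eq:mass}.
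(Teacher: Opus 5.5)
Your proof is correct and follows essentially the same route as the paper's: the paper splits on whether $c-\theta\le\Delta$ (with $c$ the smallest of the $k$ children), while you split on $\theta\ge s$. Both cases yield the same two conclusions, either every other child lies at or below the threshold or each of the $k$ children lies more than $\Delta$ above it. Your $W_L$ argument also writes out explicitly the monotonicity of the softmax weight that the paper states in a single sentence.
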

\begin{proof}
Let $c$ be the smallest of the $k$ children and $\theta$ the threshold of \cref{smd:eq:threshold}. If $c-\theta\le\Delta$, every other child lies at or below $\theta$ and $W_Q=1$. Otherwise each of the $k$ children has weight at least $(c-\theta)/\gamma>\Delta/\gamma$. The bound on $W_L$ holds because the softmax weight of the $k$ children increases with their values and decreases with those of the others.
\end{proof}

\begin{remark}[Unequal children]
\label{smd:rem:unequal}
The $k$ children need not be equal. Let them take values $-\delta_i\le0$ with $\min_i\delta_i=0$, above the $N$ others at $-\Delta$. The softmax sum gives $W_L=k_{\mathrm{eff}}/(k_{\mathrm{eff}}+Ne^{-\beta\Delta})$ with the effective count $k_{\mathrm{eff}}=\sum_ie^{-\beta\delta_i}\le k$, where $Ne^{-\beta\Delta}$ becomes $\sum_je^{-\beta\Delta_j}$ if the others lie at $-\Delta_j$, and \cref{smd:eq:threshold} gives $W_Q=1$ if and only if $\sum_i(\Delta-\delta_i)\ge\gamma$. Both statements reduce to \cref{smd:eq:mass} for equal children.
\end{remark}

\Cref{smd:thm:allocation,smd:cor:gap} answer (R2): the sparsemax weight on the $k$ children is bounded below independently of $N$ and of the values of the other children, is the whole weight once $\Delta\ge\gamma/k$, and, since $\gamma\le4\varepsilon$ under \cref{smd:eq:matching} for every $m\ge2$, is at least $\min\{1,k\Delta/(4\varepsilon)\}$ at every $N$. For $\Delta<\varepsilon$ the sound LSE has no such bound, since its weight in the configuration of \cref{smd:thm:allocation} is the smallest possible and decays as $N^{-(1-\Delta/\varepsilon)}$ (\cref{smd:fig:operator}(b,c)). For $\Delta>\varepsilon$ the weight of the sound LSE on the $k$ children also tends to 1, so choosing $\varepsilon$ below the gap restores its weight. The gap is a property of the trajectory, however, and is not known when $\varepsilon$ is chosen. Sparsemax keeps the whole weight for every $\varepsilon\le k\Delta(1-1/m)/2$, at every $N$. This range exceeds the range $\varepsilon<\Delta$ of the sound LSE by the factor $k(1-1/m)/2$, which is above one for $k\ge3$. A smaller $\gamma$ widens that range of gaps, but the gradient of the lower extrema is Lipschitz in the Euclidean norm with constant $1/\gamma$, so the constraint becomes less smooth, and as $\gamma\to0$ the operator tends to the exact extremum; at equal $\varepsilon$ this constant, $(1-1/m)/(2\varepsilon)$, is below the LSE gradient's $\log m/(2\varepsilon)$ for every $m\ge2$. The matching \cref{smd:eq:matching} equalizes the worst-case error and not the realized error. At a tie of $k$ entries of a minimum node whose other entries lie far above the tie, the error of $Q^-_\gamma$ is $\varepsilon(1-1/k)/(1-1/m)$ and that of the LSE minimum tends to $\varepsilon\log k/\log m$, which is smaller for $1<k<m$, so sparsemax gives the smaller value at equal $\varepsilon$.

The inner minimum of an until node runs over every sample of the prefix, so a trajectory that violates $\phi$ at $k$ samples and then keeps $\rho^\phi$ a gap $\Delta$ above the violation for $N$ samples gives the node the structure of \cref{smd:thm:allocation}, with $N$ growing by one per sampling period of the wait. Through the chain rule the controls receive $\sum_ip_{\gamma,i}\nabla_Uz_i$, so the $N$ samples contribute nothing under sparsemax once $\Delta\ge\gamma/k$ and almost everything under the sound LSE once $W_L$ is small. Whether this weight moves the trajectory out of the violation depends on the dynamics (\cref{smd:sec:results}).

\section{Numerical Examples}
\label{smd:sec:results}
\subsection{Planar unicycle}
\label{smd:subsec:planar}
A unicycle with position $(x,y)$ and heading $\theta$ follows $x^+=x+hv\cos\theta$, $y^+=y+hv\sin\theta$, $\theta^+=\theta+h\omega$ with $h=0.1$ s, speed $|v|\le1$ m/s, and turn rate $|\omega|\le\pi/2$ rad/s. Four disks lie in a square workspace of side 10 m (\cref{smd:fig:planar}(a)), and the predicate of a disk is the signed distance to its boundary in metres. The unicycle must stay out of Red until it is in Green within a window of 2 s, then reach Blue and stay there for 2 s, never enter the Obstacle, and stay in the workspace,
\begin{equation}
\begin{aligned}
\phi={}&(\neg\mathrm{Red}\ \mathbf U_{I_1}\ \mathrm{Green})\ \wedge\ \mathbf F_{I_2}\mathbf G_{[0,2]}\mathrm{Blue}\\
&\wedge\ \mathbf G\,\neg\mathrm{Obstacle}\ \wedge\ \mathbf G\,\mathrm{Boundary}.
\end{aligned}
\label{smd:eq:planar_spec}
\end{equation}
Two fixed trajectories round Red on the way to Green. $S_1$ cuts into Red on an arc concentric with it, so $k=21$ samples lie inside Red at a depth of at most $\delta=0.015$ m, and then waits 0.023 m outside Red until the window $I_1$ opens. We lengthen the wait from 3 to 130 s, which keeps $N=31$ to 1301 samples a gap $\Delta=0.038$ m above the violation. $S_2$ passes Red at a clearance of 0.3 m at one point and satisfies $\phi$ with exact robustness 0.3, which the until conjunct sets.

We compare five robustness measures. The plain LSE, the sound LSE, and sparsemax have the same worst-case error $\varepsilon=0.1$ m per node under \cref{smd:eq:matching}, so $\delta=0.15\varepsilon$ and $\Delta=0.38\varepsilon$. The generalized-mean robustness (GMR)~\cite{mehdipour2025generalized} replaces a conjunction of $d$ operands by the power mean $(\frac1d\sum_ix_i^p)^{1/p}$, the geometric mean at $p=0$, when every operand is positive and by $-(\frac1d\sum_i|\min\{x_i,0\}|^q)^{1/q}$ otherwise. It evaluates each witness of until as a two-operand conjunction of the right operand with the conjunction over the prefix. It is sound for all $p$ and $q$~\cite[Theorem~2]{mehdipour2025generalized}, since its sign is that of the exact robustness, and it preserves the sign and not a margin (\cref{smd:sec:related}). We use the orders $(p,q)=(0,1)$, a setting used in~\cite{mehdipour2025generalized}, and $(-10,10)$, whose power means lie closer to the extrema. As $p\to-\infty$ and $q\to\infty$ GMR tends to the exact robustness~\cite[Sections~IV-B and~IV-C]{mehdipour2025generalized}, as the LSE and sparsemax do for $\varepsilon\to0$, so its order sets how much the other samples contribute against its closeness to the exact robustness, as $\varepsilon$ does.

\begin{table}[t]
\caption{Planar unicycle at $\varepsilon=0.1$ m: values of $\phi$ on the fixed trajectories at the 3 s wait, weight on the samples of $S_1$ inside Red at waits of 3 and 130 s, and exact robustness of the trajectory returned from $S_1$ at constraint level $c$. Entries shown as $<0.001$ lie in $(0,0.001)$.}
\label{smd:tab:planar}
\centering
\footnotesize
\setlength{\tabcolsep}{2.4pt}
\begin{tabular}{@{}lcccccc@{}}
\toprule
& \multicolumn{2}{c}{value} & \multicolumn{2}{c}{weight on $S_1$} & \multicolumn{2}{c}{returned $\rho$} \\
\cmidrule(lr){2-3}\cmidrule(lr){4-5}\cmidrule(lr){6-7}
measure & $S_1$ & $S_2$ & 3 s & 130 s & $c=0$ & $c=0.2$ \\
\midrule
exact & $-0.015$ & 0.300 & & & & \\
plain LSE & $+0.024$ & 0.357 & 0.68 & 0.19 & $-0.058$ & 0.146 \\
sound LSE & $-0.076$ & 0.258 & 0.68 & 0.19 & 0.045 & 0.218 \\
GMR $(0,1)$ & $-0.0001$ & 1.69 & 0.043 & 0.007 & $<0.001$ & $<0.001$ \\
GMR $(-10,10)$ & $-0.009$ & 0.49 & 0.71 & 0.60 & $<0.001$ & 0.141 \\
sparsemax & $-0.110$ & 0.214 & 1.000 & 1.000 & 0.087 & 0.248 \\
\bottomrule
\end{tabular}
\end{table}

\begin{figure}[t]
\centering
\includegraphics{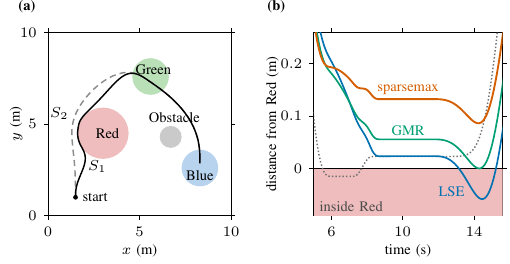}
\caption{Planar unicycle. (a) The regions and the two fixed trajectories: $S_1$ cuts into Red and waits beside it, $S_2$ passes Red at a clearance of 0.3 m. (b) Distance from Red of the trajectories returned at $c=0$ under the plain LSE (solid) and the sound LSE (dashed), GMR of order $(-10,10)$, and sparsemax, from the initial trajectory $S_1$ (dotted).}
\label{smd:fig:planar}
\end{figure}

On $S_1$ the unicycle is in Green through the whole window, so all 21 witnesses tie and the plain LSE is positive (\cref{smd:tab:planar}) on a trajectory 0.015 m inside Red, within $0.006\varepsilon$ of the bound of \cref{smd:prop:flip} with an inner-minimum error of $0.62\varepsilon$; the error of sparsemax, $0.95\varepsilon$, arises at the inner minimum over the tied samples (\cref{smd:prop:bands}), and both orders of GMR have the correct sign. On $S_2$, at a clearance of 0.3 m, the plain LSE and GMR exceed it, whereas the sound LSE and sparsemax give margins that \cref{smd:thm:composition} certifies. The weight of sparsemax on the samples inside Red stays 1, that of both LSE variants falls as \cref{smd:thm:allocation} predicts for $\Delta<\varepsilon$, and that of GMR scales, by eqs.~(12) and~(15) of~\cite{mehdipour2025generalized}, as $(k/(2d))^{1/q}$ with the $d$ samples of the prefix.

We then minimize the effort $E=\frac1T\sum_t\|u_t/u_{\max}\|_2^2$ subject to \cref{smd:eq:conjuncts}, from $S_1$, with a first-order augmented Lagrangian method under single shooting over the box-bounded controls: 400 updates, each a step along the normalized gradient of the augmented Lagrangian under an Armijo backtracking, with the multiplier and penalty of each conjunct revised every ten updates. Every smoothed constraint then holds within 0.0003 m, and the exact robustness values that follow are measured on the returned trajectories. With $c=0$ every constraint of the plain LSE holds on a trajectory that cuts 0.058 m into Red (\cref{smd:fig:planar}(b)). The constraint of GMR at zero is the exact constraint, and the trajectory of order $(-10,10)$ touches Red. The sound LSE and sparsemax keep the unicycle 0.045 and 0.087 m outside. With $c=0.2$ m only these two lower bounds return a trajectory whose exact robustness is at least $c$, at efforts 1\% and 4\% above that of the plain LSE. From $S_2$ the plain LSE returns a violating trajectory with $c=0$, and the two lower bounds meet $c=0.2$ m. The authors of GMR instead maximize it in the cost under a margin constraint~\cite{mehdipour2025generalized}.

\subsection{Manipulator beside a person}
\label{smd:subsec:manipulator}
\begin{figure*}[t]
\centering
\includegraphics[width=0.8\textwidth,trim=0 298 0 0,clip]{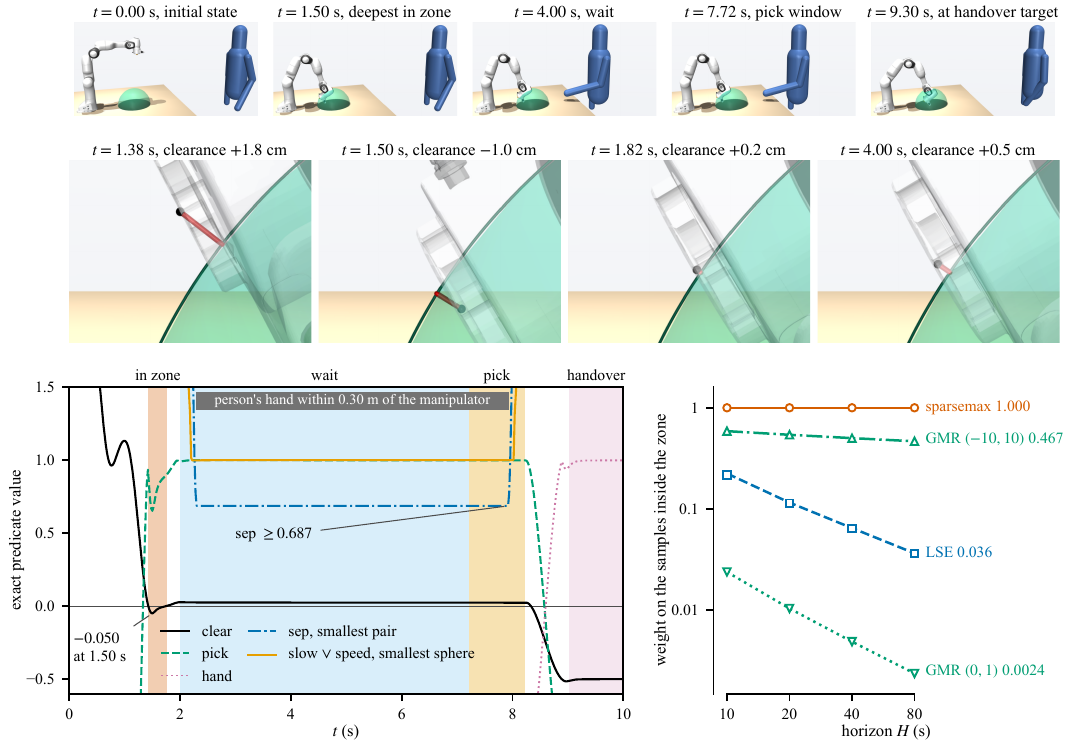}\\[3pt]
\includegraphics[width=0.8\textwidth,trim=0 0 0 171,clip]{scene_collage_person.pdf}
\caption{A violating initial trajectory of the manipulator ($H=10$ s, exact robustness $-0.05$). Top: manipulator, person and zone (green) at five times. Bottom left: exact predicate values; the end effector is inside the zone for 0.3 s and then waits at the pick pose. Bottom right: weight on the samples inside the zone at $\varepsilon=0.2$, medians over the four trajectories (markers), with the curves of \cref{smd:rem:unequal} for the LSE and sparsemax and straight lines for GMR.}
\label{smd:fig:scene}
\end{figure*}

A simulated seven-joint Franka Panda manipulator under torque control shares a table with a person. It picks a part outside a zone, a ball of radius 0.2 m centred 0.5 m in front of its base and reserved for the person's reach, and hands it over inside the zone (\cref{smd:fig:scene}). Each predicate is a distance or speed margin divided by its threshold. The length is a smooth norm, $\sqrt{\|x\|^2+\eta^2}$ in a margin of the form threshold minus length and $\sqrt{\|x\|^2+\eta^2}-\eta$ in one of the form length minus threshold, with $\eta$ one percent of the smallest threshold, so every leaf lies at most $e_v=0.01$ below its exact value and never above it. With times in seconds, a horizon $H$, a pick window $[w,w+1]$, and a handover window $I_{\mathrm h}$ after it, the specification is
\begin{align}
\phi={}&\mathbf G_{[0,H]}\bigwedge_{\ell,s}\mathrm{sep}_{\ell s}
\wedge\mathbf G_{[0,H]}\bigwedge_{\ell}\Bigl(\bigwedge_{s}\mathrm{slow}_{\ell s}\vee\mathrm{speed}_\ell\Bigr)\nonumber\\
&\wedge\bigl(\mathrm{clear}\ \mathbf U_{[w,w+1]}\ \mathrm{pick}\bigr)
\wedge\mathbf F_{I_{\mathrm h}}\mathbf G_{[0,0.8]}\mathrm{hand},
\label{smd:eq:robot_spec}
\end{align}
where $\mathrm{sep}$ and $\mathrm{slow}$ bound the separation of the manipulator's spheres $\ell$ from the person's spheres $s$, $\mathrm{speed}$ bounds their speed, $\mathrm{clear}$ is the distance of the end effector from the zone, and $\mathrm{pick}$ and $\mathrm{hand}$ are tolerances about two poses. The thresholds are a separation of 0.10 m for $\mathrm{sep}$, 0.30 m and 0.25 m/s for $\mathrm{slow}$ and $\mathrm{speed}$, and 0.05 m about the pick and handover targets; the pick window opens at $w=H-2.78$ s and the handover window $I_{\mathrm h}$ follows it and its dwell ends at $H$. We sample at $h=0.02$ s, simulate with the Warp implementation of MuJoCo~\cite{todorov2012mujoco}, and differentiate it in reverse mode.

On four initial trajectories, which differ in one intermediate configuration, the end effector overshoots into the zone on its way to the pick pose, and the manipulator then waits at that pose until the pick window closes at $H-1.78$ s. In units of the worst-case error $\varepsilon=0.2$ per node, the overshoot is $0.25\varepsilon$ deep and the pick pose lies a gap $\Delta=0.375\varepsilon$ above it. The end effector is inside the zone at $k=18$ or 19 samples, and the wait contains $N=287$, 787, 1787, and 3787 samples at horizons of 10, 20, 40, and 80 s. Each trajectory violates the until conjunct with exact robustness $-0.05$ and satisfies the other three conjuncts.

\begin{table}[t]
\caption{The until conjunct on the violating initial trajectories of the manipulator at $\varepsilon=0.2$, medians over the four trajectories: value at $H=10$ s, sum of the leaf weights on the samples inside the zone, and norm of the part of the torque gradient that passes through those samples.}
\label{smd:tab:manipulator}
\centering
\footnotesize
\setlength{\tabcolsep}{3pt}
\begin{tabular}{@{}lccccc@{}}
\toprule
& & \multicolumn{2}{c}{weight on} & \multicolumn{2}{c}{torque gradient through} \\
& & \multicolumn{4}{c}{the samples inside the zone} \\
\cmidrule(lr){3-4}\cmidrule(lr){5-6}
measure & value & 10 s & 80 s & 10 s & 80 s \\
\midrule
exact & $-0.050$ & & & & \\
plain LSE & $+0.023$ & 0.216 & 0.036 & 3.62 & 0.67 \\
sound LSE & $-0.177$ & 0.216 & 0.036 & 3.62 & 0.67 \\
GMR $(0,1)$ & $-0.0006$ & 0.024 & 0.0024 & 0.40 & 0.039 \\
GMR $(-10,10)$ & $-0.029$ & 0.588 & 0.467 & 10.1 & 8.0 \\
sparsemax & $-0.220$ & 1.000 & 1.000 & 16.82 & 16.82 \\
\bottomrule
\end{tabular}
\end{table}

The manipulator waits at the pick pose through the whole pick window, so all 51 witnesses tie and the plain LSE is positive on the violated conjunct (\cref{smd:tab:manipulator}), as \cref{smd:prop:flip} predicts. The sound LSE and sparsemax give values below the exact one and within the budget $B_{\mathbf U}=2\varepsilon+e_v=0.41$ of \cref{smd:eq:until-budget}. The weight of sparsemax on the samples inside the zone is 1.000 at every horizon, because every sample of the wait lies beyond the threshold $\theta$ of \cref{smd:eq:threshold}. The weight of the LSE falls from 0.216 to 0.036 and stays within 5\% of the expression of \cref{smd:rem:unequal} (\cref{smd:fig:scene}). By eqs.~(12) and~(15) of~\cite{mehdipour2025generalized}, the weight of GMR scales as $(k/(2d))^{1/q}$. The chain rule carries these weights to the torques. The part of the gradient with respect to the normalized torques $u_t/u_{\max}$ that passes through the samples inside the zone, in the Euclidean norm over all samples, is the same at 10 and 80 s under sparsemax and falls by a factor of 5.4 under the LSE and of 10 under GMR of order $(0,1)$.

We repeat the optimization on the torques with $c=0$ from initial trajectories with exact robustness $-0.05$ and $-0.10$, overshoots of 1.0 and 2.0 cm, at 16 pick and handover locations and horizons of 10, 20, and 40 s with the solver of the unicycle example, whose normalized step removes the gradient's size from the update. In the median over the 16 locations, the first exactly satisfying trajectory comes at update 1 or 2 under sparsemax and GMR of order $(-10,10)$ at every horizon and from 4.5 to 19.5 updates, growing with the horizon, under the sound LSE (\cref{smd:tab:outcomes}). Under the solver's own constraint the medians are 10 to 20 updates for sparsemax and 12.5 to 29 for the sound LSE, and 3 of 32 sparsemax trajectories at 40 s never meet it, since its realized error is larger (\cref{smd:tab:manipulator}). After 100 updates at 40 s the median exact robustness is near zero under the plain LSE, whose constraints hold at the violating initial trajectory so its first updates reduce the effort alone, and under GMR, which preserves the sign and not a margin, while the two lower bounds keep their realized errors as margins.

\begin{table}[t]
\caption{Optimization on the torques from the violating initial trajectories, the same 16 locations for every measure, $c=0$, 100 updates: median first update with a nonnegative exact robustness at 10, 20, and 40 s, with quartiles at 40 s; at 40 s, the median and smallest exact robustness after 100 updates and the median effort relative to the initial trajectory, $E_{100}/E_0$. A superscript counts the locations that never satisfied, left out of the median.}
\label{smd:tab:outcomes}
\centering
\footnotesize
\setlength{\tabcolsep}{1.5pt}
\begin{tabular}{@{}llccccc@{}}
\toprule
& & \multicolumn{3}{c}{first satisfying update} & \multicolumn{2}{c}{after 100 updates, 40 s} \\
\cmidrule(lr){3-5}\cmidrule(lr){6-7}
measure & $\rho_0$ & 10 s & 20 s & 40 s & $\rho_{100}$ & effort \\
\midrule
plain LSE & $-0.05$ & 6 & 6 & 6 [5,7] & 0.001 ($-0.006$) & 0.61 \\
 & $-0.10$ & $7^{1}$ & $7^{2}$ & 7 [7,23] & $-0.0005$ ($-0.15$) & 0.50 \\
sound LSE & $-0.05$ & 4.5 & 10 & 17 [14,22] & 0.097 (0.06) & 0.94 \\
 & $-0.10$ & 6 & 11 & 19.5 [13,25] & 0.083 (0.04) & 0.94 \\
GMR $(0,1)$ & $-0.05$ & 3 & $5^{1}$ & $5^{1}$ [5,6] & 0.003 ($-0.02$) & 0.33 \\
 & $-0.10$ & $6^{1}$ & $11^{3}$ & $7^{4}$ [5,15] & 0.005 ($-0.03$) & 0.33 \\
GMR $(-10,10)$ & $-0.05$ & 1 & 1 & 1 [1,2] & 0.014 ($-0.003$) & 0.65 \\
 & $-0.10$ & 2 & 2 & 2 [2,3] & 0.009 ($-0.005$) & 0.67 \\
sparsemax & $-0.05$ & 1 & 1 & 1 [1,2] & 0.172 (0.16) & 0.97 \\
 & $-0.10$ & 2 & 2 & 2 [2,4] & 0.166 (0.15) & 0.98 \\
\bottomrule
\end{tabular}
\end{table}

\section{Conclusion}
\label{smd:sec:conclusion}
The sparsemax lower extrema are lower bounds with an error per node of at most $\gamma/2$ for any number of samples, and their weight on $k$ samples a gap $\Delta$ beyond all others is at least $\min\{1,k\Delta/\gamma\}$ for every horizon. At equal worst-case error $\varepsilon$ this weight is at least $\min\{1,k\Delta/(4\varepsilon)\}$, while the weight of the sound LSE tends to zero as the number of other samples grows when $\Delta<\varepsilon$ (\cref{smd:thm:allocation,smd:cor:gap}).

Sparsemax has costs. Its minimum's band is attained at a tie, so at equal worst-case error sparsemax gives the smaller value (\cref{smd:subsec:allocation}) and rejects every trajectory whose exact margin lies below its realized error, 0.17 against 0.13 for the sound LSE in \cref{smd:tab:manipulator}, which can leave no feasible plan where feasibility is tight. A sample beyond the threshold receives no derivative, so nothing warns the solver of one about to become critical, and in a nested specification a zero at an inner node hides that sample from the nodes above it. The operators are once continuously differentiable, and the update advantage we measured is a first-order one. GMR preserves the sign of the exact robustness and not a margin, so under an effort cost with a threshold of zero its trajectories end at an exact robustness near zero, where a lower bound keeps its realized error as a margin.\ifarxiv{} \Cref{smd:tab:stops} reports these measurements.\fi{} For future work, we seek to exploit the sparse Hessian of sparsemax when the sampled states are decision variables, through sparse automatic differentiation for second-order solvers.

%
\section*{Acknowledgment}
    The authors acknowledge the use of Claude LLMs~\cite{anthropic2026claude} for code development, visualizations, and manuscript review.

\ifarxiv
\input{sections/A-appendix}
\fi

\bibliographystyle{IEEEtran}
\bibliography{references}

\end{document}